\newcommand{\revision}[1]{\textcolor{black}{#1}}
\newcommand{\initialstates}{\mathcal{I}}
\DeclareMathOperator*{\EX}{\mathbb{E}}
\newcommand{\agidx}{i}
\newcommand*{\agentstate}{s}
\newcommand{\action}{a}
\newcommand{\timeid}{t}
\newcommand{\utility}{{\utility}^{\agidx}_{\timeid}}
\newcommand{\policy}{\pi}
\newcommand{\goalset}{F}
\newcommand{\obstacles}{O}
\newcommand{\traj}{\tau}
\renewcommand\paragraph{\@startsection{paragraph}{4}{\z@}%
	{-12\p@ \@plus -4\p@ \@minus -4\p@}%
	{-0.5em \@plus -0.22em \@minus -0.1em}%
	{\normalfont\normalsize\bfseries}}
\newcolumntype{C}{>{\centering\arraybackslash}X}
\begin{document}

\title{Multi-agent Path Finding for Timed Tasks
	\\ using Evolutionary Games}


\author{Sheryl Paul\inst{1} \and
	Anand Balakrishnan\inst{1} \and
	Xin Qin\inst{1} \and
	Jyotirmoy V. Deshmukh \inst{1}}
\authorrunning{Paul et al.}
%
\institute{University of Southern California, Los Angeles California, USA- 90007
	\email{\email{\{sherylpa,anandbal,xinqin,jdeshmuk\}@usc.edu}}\\
}

\maketitle              
\begin{abstract}

	Autonomous multi-agent systems such as hospital robots and package delivery
	drones often operate in highly uncertain environments and are expected to
	achieve complex temporal task objectives while ensuring safety. While
	learning-based methods such as reinforcement learning are popular methods to
	train single and multi-agent autonomous systems under user-specified and
	state-based reward functions, applying these methods to satisfy trajectory-level
	task objectives is a challenging problem. Our first contribution
	is the use of weighted automata to specify trajectory-level objectives, such
	that, maximal paths induced in the weighted automaton correspond to desired
	trajectory-level behaviors. We show how weighted automata-based specifications
	go beyond timeliness properties focused on deadlines to performance properties
	such as expeditiousness. Our second contribution is the use of evolutionary game
	theory (EGT) principles to train homogeneous multi-agent teams targeting
	homogeneous task objectives. We show how shared experiences of agents and
	EGT-based policy updates allow us to outperform state-of-the-art reinforcement learning (RL) methods in minimizing path length by nearly 30\% in large spaces. We also show that our algorithm is computationally faster than deep RL methods by at least an order of magnitude. Additionally our results indicate that it scales better with an increase in the number of agents as compared to other methods.
\end{abstract}
\section{Introduction}
    







Large-scale deployment of multi-agent autonomous mobile systems is becoming a
reality in many sectors such as automated warehouses
\cite{sven_warehouse,config}, surveillance and patrolling
\cite{agmon_multiagent,agmon_patrol}, package delivery
\cite{salzman2020research}, and logistics support for cargo in aviation, and
railways \cite{sven_airport,abate}. On the one hand, these autonomous systems
are safety-critical and require careful planning to avoid unsafe events such
as collisions with other agents or the environment, transgressing zoning
restrictions, etc. On the other hand, the deployed systems are expected to be
high-performing; for instance, where performance is measured in the number of
tasks performed, task completion times, and load balancing across agents. This
is an especially difficult challenge when the agents do not know the location of
the goal (making it difficult to use heuristic search methods). The core
technical challenge for such systems is the {\em planning} problem: how do we
synthesize plans for the agents to move in their environment while satisfying
safety specifications and task objectives with unknown goal locations, while
simultaneously ensuring performance or throughput of the overall system?

\begin{wrapfigure}[23]{r}{0.45\textwidth}
    \centering
    \includegraphics[width=\linewidth]{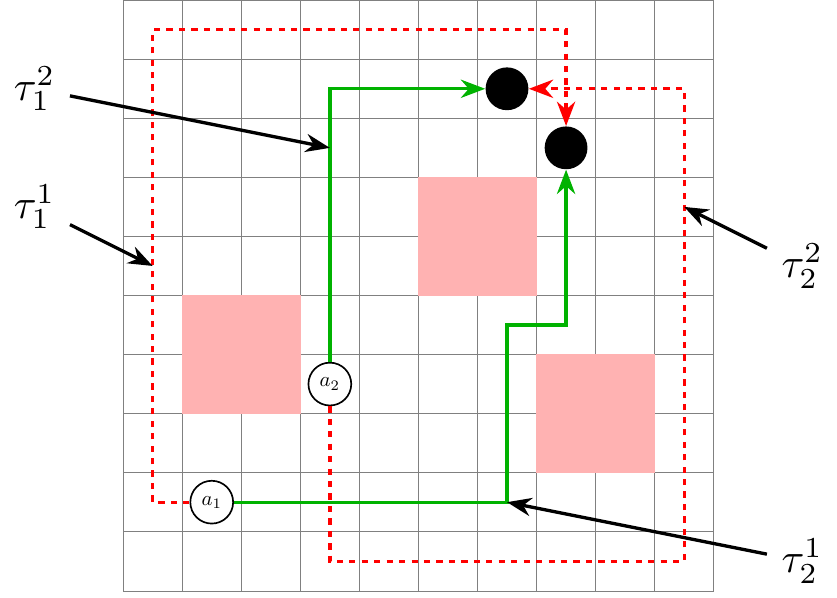}
    \caption{\textit{Shortest paths in the grid environment: agents \(a_1\) and \(a_2\) must get to designated final states (black circles) while avoiding red obstacle regions. Red dashed trajectories \(\tau^1_1\) of agent \(a_1\) and \(\tau^2_1\) of \(a_2\)  satisfy the task but are not the shortest paths, green
    solid trajectories  \(\tau^1_2\) and \(\tau^2_2\)are the shortest paths. Arrows
    point towards goals.}}
    \label{fig:intro_grid_sample}
    \vspace{10pt}
\end{wrapfigure}

Traditional AI approaches such as the {\em multi-agent pathfinding} (MAPF)
problem use various heuristic methods in combinatorial optimization and graph
search to plan paths for many agents to reach their respective goals while
avoiding collisions \cite{mapf_defn}. However, these approaches involve a
centralized planner with full observability of the system, as well as full
knowledge of the goal locations. These techniques usually suffer from poor
scalability with increasing numbers of agents and world sizes. For highly
uncertain environments, an alternative is to endow agents with autonomy that
allows them to navigate an environment with the help of sensors. In this
setting, each individual agent has a control policy to determine the optimal
action to perform based on its sensor observations. Reinforcement learning (RL)
has emerged as a promising approach to learning such control policies in uncertain
environments modeled as Markov decision processes. The main idea in RL is to use
hand-crafted and state-based rewards, and learn policies that optimize
the cumulative long-term rewards. An important assumption is that additive
long-term rewards capture desired global behavior. There are many instances of
{\em reward hacking}, where myopic rewards for contrasting objectives lead to
undesired long-term behaviors \cite{amodei2016concrete}.

\revision{An approach to tackle this problem of specifying time-sensitive and sequential tasks with complex dependencies is to use logic-based specifications like in Linear Temporal Logic (LTL) \cite{pnueli1977temporal} or its extensions, including Signal Temporal Logic (STL) \cite{maler2005real} and Mission-time Linear Temporal Logic (MLTL) \cite{reinbacher2014temporallogic}.
Moreover, such specification languages can be used to define complex tasks for multi-agent systems \cite{kempa2022swarm}.}
In recent years, {\em reward shaping} using temporal logic-based objectives
\cite{rltl1,rltl2,rltl3,xinloki} has emerged as a promising approach to address
the problem of defining state-based rewards to satisfy complex spatiotemporal tasks.
A prevailing theme in many reward shaping papers is to utilize
either the Boolean satisfaction semantics for LTL \cite{abate} or quantitative
satisfaction semantics (i.e., robustness values for STL)
\cite{balakrishnan2019structured,calin} as reward signals. 

However, while these
logical specifications are well-suited for specifying timeliness properties such
as deadlines, they are not as effective at specifying performance properties
such as {\em expeditiousness}. For example, consider the environment shown in
Fig.~\ref{fig:intro_grid_sample} and the trajectories shown in green and red,
respectively. Let $a_i$ denote the position of the \(i^{\text{th}}\) agent, then
the formula $\Ev_{[0,25]} (a_i \in \mathrm{Goal})$ specifies that the agent
should reach the goal within $25$ time-steps.

While all the trajectories satisfy the formula, it is clear that the green
trajectories are more expeditious as they reach the goal faster. Note that we
do not want to replace this specification with one that indicates a shorter
deadline. Rather, we would like to prefer the ones that reach the goal faster.
Such requirements are crucial in multi-agent systems where there may be hard
deadlines on agents reaching their goals, with agents reaching goals at varying
times, and we want to prefer behaviors where average expeditiousness across
agents is preferred.

To address this problem, we use \emph{weighted automata} specifications as they allow us to capture performance metrics like expeditiousness \cite{droste2009weighted,mohri2009weighted}.
Weighted automata are a class of automata with transition \emph{weights} interpreted over algebraic structures and generalize the various qualitative and quantitative semantics of discrete-time temporal logics \cite{algebraicstl,jaksic2018quantitative}.

 In this paper, we present an innovative application of an {\em evolutionary algorithm} rooted in evolutionary game theory (EGT) \cite{egt_games} to address the Multi-Agent Pathfinding (MAPF) problem. This algorithm is adept at deriving optimal policies for multiple agents in complex environments, advancing beyond existing optimization-focused techniques that rely on preset goal knowledge and environmental models \cite{pant1,pant2,hashemi2024scaling,sam}.
We examine a scenario involving numerous homogeneous agents without predefined goal locations, where a single, unified policy learned from collective experiences supersedes individual training. This shifts the problem to a single-agent learning paradigm. Our method is benchmarked against single-agent algorithms like Q-learning \cite{sutton}, PPO \cite{ppo}, and heuristic searches such as $A^*$ \cite{astar}. Here, $A^*$ benefits from its heuristic approach to goal navigation. Our results show that our evolutionary strategy not only achieves optimal policy faster but also competes with, and can surpass the aforementioned methods in efficiently reaching targets.
\paragraph{Contributions:}
To summarize, the main contributions of this paper are as follows:
\begin{enumerate*}[label={(\arabic*)}]

\item we introduce weighted automata (WA)-based task objectives and show that
these encourage agents to finish their tasks sooner;
    
\item we propose an algorithm `MAPF-EGT' (Multi Agent Pathfinding using
Evolutionary Game Theory) that uses evolutionary techniques to learn stochastic
control policies for homogeneous agents trying to fulfill homogeneous task
objectives (but with {\em a priori} unknown goals); and
    
\item we show empirical results that our method outperforms state-of-the-art RL
and heuristic search methods in various grid-world environments. 
    
\end{enumerate*}

\section{Preliminaries}

\paragraph{Policy synthesis for stochastic games:}
To formalize the problem of synthesizing policies for agents such that they
satisfy their assigned task specifications, we use the framework of stochastic
games.

\begin{definition}[Stochastic Game] A {\em stochastic game} over a finite set of
agents $\mathcal{N}$, where each agent $i \in \mathcal{N}$ can be modeled as a
tuple $(\mathcal{S}^i, \mathcal{A}^i, \Delta^i, \gamma, \mathcal{R}^i)$, where
$\mathcal{S}^i$ denotes the set of possible states of agent $i$, $\mathcal{A}^i$
is the set of actions available to agent $i$, and $\Delta^i: \mathcal{S}^i
\times \mathcal{A}^i \times \mathcal{S}^i \to [0,1]$ is a joint probability
distribution over states, actions and next states that defines the transition
dynamics of the game. At each time-step $t$, agent $i$ is in some state $s_t^i$
and executes action $a_t^i$, transitioning to the next state $s_{t+1}^i \sim
\Delta^i(s'\mid s = s_t^i, a = a_t^i)$, and receives reward $r^i_t =
\mathcal{R}^i(s_t^i,a_t^i)$. The discount factor $\gamma \in [0,1]$ prioritizes
early rewards by discounting a reward $r^i_t$ at time $t$, by a factor of
$\gamma^t$. A {\em joint action} at time $t$ is a tuple $\mathbf{a}_t = (a_t^1,
a_t^2, \ldots, a_t^n)$. The system transitions from a joint state $\mathbf{s}_t
= (s_t^1, s_t^2, \ldots, s_t^n)$ based on the joint action $\mathbf{a}_t$ to a
joint state sampled from the joint transition probability distribution
$\mathbf{s}_t^i \sim \Delta(\mathbf{s}' | \mathbf{s}, \mathbf{a})$.
\end{definition}


A policy $\pi^i$ for agent $i$ is defined as a joint probability distribution
over the set of states and action, i.e., $\pi^i: \mathcal{S}^i \times
\mathcal{A}^i \to [0, 1]$. A joint policy $\pi$ can be viewed as the tuple of
all individual agent policies: $\pi = (\pi^1, \pi^2, \ldots, \pi^n)$. A
trajectory $\tau^i$ of agent $i$ induced by policy $\pi^i$ is defined as a
$(T+1)$-length sequence of state-action pairs: 
$\tau^i = \{
    (s_0^i, a_0^i ), (s_1^i, a_1^i ), \ldots, (s_T^i, a_T^i )\}, \notag \text{where, } \forall t < T: a_t^i \sim \pi^i(a^i | s = s_t^i),
    (s_{t+1}^i, r_{t+1}^i) \sim \mathcal{T}.$
\noindent A trajectory $\tau$ for all agents in the game is a tuple of individual agent
trajectories: $\tau = (\tau^1, \tau^2, \ldots, \tau^n)$, where each $\tau^i$ is
as defined previously for agent $i$. Let $\eta^i$ represent the expected sum of
rewards (return) for agent $i$ under policy $\pi^i$,
and \(\eta\) the total return across all agents under the joint policy $\pi$:  
\begin{equation*}
    \begin{array}{cc}
       \eta^{i} = \mathbb{E}\left[\sum_{t=0}^T \gamma^t r_t^i \mid s_0^i, \pi^i\right]
       ,&\qquad
       \eta = \sum_{i \in \mathcal{N}} \eta^i
    \end{array}
\end{equation*}

Here, $s_0^i$ is the initial state of agent $i$.
An optimal policy $\pi^*$ is defined as the policy that maximizes the total
expected reward $\eta$ across all agents in the system. The optimal policy
$\Pi^*$ is composed of individual policies $\Pi^* = ({\pi^{1}}^*, {\pi^{2}}^*,
\ldots, {\pi^{n}}^*)$ for each agent, and \( \eta(\Pi^*) \) is the optimal joint policy consisting of individual joint policies. We note that in our setting, we use
homogeneous agents with identical tasks, so instead of using per-agent policies, we have $n$
instances of the same policy $\pi^*$.
 
In our setting, we assume that at time $t=0$, agents can start at some location
defined by the set of initial states $\initialstates$, must avoid obstacles
defined by the set of locations in $\obstacles$ at all times, and must reach a
goal in the set of goal states: $\goalset$ within time $T$ while avoiding obstacle
states at all times. Formally, $s^i_0 \in \initialstates, s^i_t \notin
\obstacles \text{ and } s^i_T \in \goalset, \ \forall i \in \mathcal{N} \text{ and }  \ 0 \leq
t \leq T$.
\paragraph{Weighted Automata (WA) based rewards:}
We now introduce \emph{weighted automata} (WA), and show how they can be used to define rewards.
In general, weighted automata are finite-state machines that have a notion of
\emph{accepting} conditions and, for each transition in the automaton, the
machine outputs a \emph{weight} for the transition \cite{droste2009weighted}.
When interpreted over algebraic \emph{semirings}, the weights in the automaton
can be used to define various different types of quantitative objectives along
with the acceptance condition.

We eschew the general definition of weighted automata, and look at the 
specific instances of \emph{quantitative languages}
\cite{chatterjee2008quantitative,boker2021quantitative,chatterjee2016quantitativea}, where
the weights of the transitions in the automaton are interpreted over \emph{valuation 
functions}.
This formalism is a generalization of the recently popular \emph{reward machine} approach
used in reinforcement learning frameworks with temporally dependent tasks~\cite{CamachoEtAl2018,icarte2018using,icarte2022reward,zhou2022hierarchical}. 
Weighted automata are more expressive than reward machines as they allow a notion of acceptance in the automaton. We will later empirically show how specific weighted automata can generalize discrete-time temporal logic
objectives (such as Signal Temporal Logic).

\begin{definition}[Weighted Automata]
    A weighted automaton is defined as a tuple \mbox{\(\Ac = \Tuple{Q, Q_I, Q_F, \Sigma, T, \lambda}\)}, where:
    \begin{itemize}
        \item \(Q\) is a finite set of locations, and \(Q_I \subseteq Q\) and \(Q_F \subseteq Q\) are respectively the set of initial locations and final locations;
        \item \(\Sigma\) is an input alphabet;
        \item \(T: Q \times \Sigma \to 2^Q\) is a (partial) labeled transition function, where \(2^Q\) is the powerset of \(Q\);
        \item \(\lambda: Q \times \Sigma \times Q \to \Re\) is a \emph{weight} function.
    \end{itemize}
\end{definition}

An automaton is \emph{complete} if for all \(q \in Q\) and \(s \in \Sigma\),
there is at least one successor location in the automaton, i.e., \(\abs{T(q, s)}
> 0\). Likewise, the automaton is \emph{deterministic} if for all \(q \in Q\)
and \(s \in \Sigma\), there is exactly one successor location, i.e., \(\abs{T(q,
s)} = 1\). As we intend to use WA-based reward functions, the set of input symbols for the WA is essentially the set of states $\mathcal{S}$ of a stochastic game as defined before.
Thus, given an input trajectory, \(\tau = (s_0, s_1, \ldots, s_l) \in
\Sigma^*\), a \emph{run} in the automaton \(\Ac\) is a sequence of locations
\((q_0, q_1, \ldots, q_{l+1})\) such that \(q_{i+1} \in T(q_i, s_i)\), for \(i
\in 0,\ldots,l\). We use \(\Run_\Ac(\tau)\) to denote the set of runs induced in
\(\Ac\) by \(\tau \in \Sigma^*\). Note that if \(\abs{\Run_\Ac(\tau)} > 0\), the
automaton is non-deterministic, and the run is \emph{accepting} if the last
location \(q_{l+1}\) in the run is in \(Q_F\).
Moreover, for a given input \(\tau = (s_0, s_1, \ldots, s_T) \in \Sigma^*\) and a 
corresponding run \(\sigma = (q_{0}, q_1, \ldots, q_l, q_{T+1})\), a sequence of weights \((w_0, w_1, \ldots, w_T)\) is produced such that \(w_i = \lambda(q_i, s_i, q_{i+1})\).
For some \(\tau \in \Sigma^*\) and a corresponding run \(r \in Q^*\) in the automaton,
we let \(\Weights_\Ac(\tau, r) \in \Re^*\) denote the sequence of weights induced by the run.

\begin{definition}[Valuation function]
    Given the weights \(w \in \Weights_\Ac(\tau, r)\) corresponding to a run \(r \in Q^*\) induced by an input \(\tau \in \Sigma^*\) on a weighted automaton \(\Ac\), a \emph{valuation function} \(\Val(w) \in \Re\) is a scalar function that outputs the \emph{weight} of the sequence.
    The weight of a trajectory \(\tau \in \Sigma^*\) in an automaton \(\Ac\) can be defined as \[w_\Ac(\tau) = \max_{r \in \Run(\tau)} \Val(\Weights(\tau, r)).\]
\end{definition}

For a sequence of weights \(w = (w_0, w_1, \ldots, w_l)\), examples of valuation functions include:
\begin{tasks}[style=itemize](2)
    \task $\mathsf{Sum}(w) = \sum_{t = 0}^l w_t$
    \task \(\mathsf{Avg}(w) = \frac{1}{l + 1} \sum_{t = 0}^l w_t\)
    \task* For a discount factor \(\gamma \in [0, 1]\), \(\mathsf{DiscountedSum}_\gamma(w) = \sum_{t = 0}^l \gamma^t w_t\)
\end{tasks}

In this paper, we are particularly interested in the \(\mathsf{DiscountedSum}_\gamma\) valuation function defined on deterministic weighted automata, as it captures the total discounted rewards semantics for the policy synthesis problem described above.

\revision{\begin{remark}
While we restrict ourselves to deterministic automata for brevity, one can extend the presented framework to non-deterministic automata by resolving the non-deterministic transitions using the \(\max_{}\) operation (or the general semiring addition) \cite{mohri2009weighted,boker2021quantitative}. Moreover, since our framework evaluates the weight of trajectories at the end of \emph{episodes}, one can resolve the non-deterministic runs using the weighting function \(w_\Ac\) as described above.
\end{remark}}

\paragraph{Problem definition:} A specific setting for multi-agent pathfinding that
we wish to solve in this paper is {\em reach-avoid} specifications, i.e., where
multiple agents are each randomly assigned start and goal locations in a shared
environment. Their task is to navigate to their goals as quickly as possible
while avoiding collisions with one another. We can express these conditions as: 
\begin{enumerate}
    \item \textbf{Expeditious goal achievement:} Each agent must reach the goal state within $T$ time-steps, and as expeditiously as possible.
    \item \textbf{Collision avoidance:} Every agent must avoid collisions with other agents and obstacles at all times.
\end{enumerate}

\paragraph{Solution outline:}
Our goal, is to define a reward
function for each agent in the system such that by maximizing the expected total
reward across all agents (denoted by \(\eta\) above), the resulting policy
\(\Pi^*\) achieves the high-level reach-avoid task in a performant fashion.
We do this by defining a weighted automaton that specifies the task for each
agent in the system, i.e., each agent aims to maximize the weight of its
trajectory in the system with respect to the same automaton specification. The
weighted automaton describes a pattern such that maximizing the composed weights
of all agent trajectories, or the \emph{utility} of the system.
\section{Reward Shaping for timed Multi-Agent Reach-Avoid Problems}
To ensure a timely multi-agent reach-avoid system, we propose a reward structure that emphasizes the importance of efficient trajectories to the goal. Previously in literature, AvSTL \cite{avstl} was developed to assess behavior over time by measuring the area under the signal curve within a given timeframe. Yet, for reach-avoid tasks where goals must be reached quickly and safely, AvSTL may inadvertently favor less direct routes due to its integral-based evaluation. Consider the robustness metric for reaching a goal within time $T$, denoted by $\Ev_{[0,T]} (reachgoal)$, as 1 for success and 0 otherwise. As shown in Figure:~\ref{fig:Counter_example}, an agent that swiftly reaches the goal and moves away may have a lower cumulative value than one following a longer route that lingers at the goal, highlighting a potential discrepancy in AvSTL's approach for such tasks.
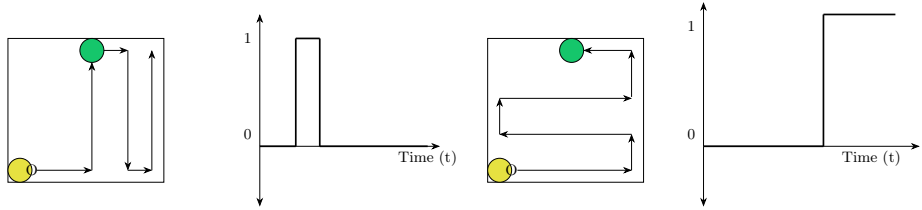
\begin{figure}[!ht]
\centering
\resizebox{1\textwidth}{!}{%
\begin{circuitikz}
\draw  (3,13.75) rectangle (6.25,10.75);
\draw  (13,13.75) rectangle (16.25,10.75);
\draw [ fill={rgb,255:red,230; green,225; blue,65} ] (3.25,11) circle (0.25cm);
\draw [ fill={rgb,255:red,230; green,225; blue,65} ] (13.25,11) circle (0.25cm);
\draw [ fill={rgb,255:red,21; green,198; blue,107} ] (4.75,13.5) circle (0.25cm);
\draw [ fill={rgb,255:red,21; green,198; blue,107} ] (14.75,13.5) circle (0.25cm);
\draw [->, >=Stealth, solid] (3.6,11) -- (4.75,11);
\draw [->, >=Stealth, solid] (4.75,11) -- (4.75,13.25);
\draw [->, >=Stealth, solid] (5,13.5) -- (5.5,13.5);
\draw [->, >=Stealth, solid] (13.62,11) -- (16,11);
\draw [->, >=Stealth, solid] (16,11) -- (16,11.75);
\draw [->, >=Stealth, solid] (16,11.75) -- (13.25,11.75);
\draw [->, >=Stealth, solid] (13.25,11.75) -- (13.25,12.5);
\draw [->, >=Stealth, solid] (13.25,12.5) -- (16,12.5);
\draw [->, >=Stealth, solid] (16,12.53) -- (16,13.5);
\draw [->, >=Stealth, solid] (16,13.5) -- (15,13.5);
\draw [ color={rgb,255:red,4; green,0; blue,255} , line width=1.3pt , solid] (3.25,11) circle (0cm);
\draw [ fill={rgb,255:red,22; green,24; blue,23} ] (9.75,13.5) circle (0cm);
\node [font=\small] at (3.5,11) {O};
\node [font=\small] at (13.5,11) {O};
\draw [->, >=Stealth, solid] (5.5,13.5) -- (5.5,11);
\draw [->, >=Stealth, solid] (5.5,11) -- (6,11);
\draw [->, >=Stealth, solid] (6,11) -- (6,13.5);
\draw [->, >=Stealth] (8.25,11.5) -- (8.25,14.25);
\draw [->, >=Stealth] (17.5,11.5) -- (17.5,14.5);
\draw [->, >=Stealth] (8.25,11.5) -- (12,11.5);
\draw [->, >=Stealth] (17.5,11.5) -- (22,11.5);
\draw [->, >=Stealth] (8.25,11.5) -- (8.25,10.25);
\draw [->, >=Stealth] (17.5,11.5) -- (17.5,10.25);
\draw [line width=1pt, short] (8.25,11.5) -- (9,11.5);
\draw [line width=1pt, short] (9,11.5) -- (9,13.75);
\draw [line width=1pt, short] (9,13.75) -- (9.5,13.75);
\draw [line width=1pt, short] (9.5,13.75) -- (9.5,11.5);
\draw [line width=1pt, short] (9.5,11.5) -- (11.75,11.5);
\draw [line width=1pt, short] (17.5,11.5) -- (20,11.5);
\draw [line width=1pt, short] (20,11.5) -- (20,14.25);
\draw [line width=1pt, short] (20,14.25) -- (21.5,14.25);
\node [font=\small] at (11.75,11.25) {Time (t)};
\node [font=\small] at (21,11.25) {Time (t)};
\node [font=\small] at (8,11.75) {0};
\node [font=\small] at (17.25,11.75) {0};
\node [font=\small] at (8,13.75) {1};
\node [font=\small] at (17.25,14) {1};
\end{circuitikz}
}%
\caption{\textit{Counter-example: An agent (O) tries to go from the initial location (yellow) to the goal location (green). Under general expeditious semantics (such as in AvSTL) the trajectory in the left figure where the agent reaches the goal quickly but wanders after would receive a lower reward (area under the curve) than the one in the right figure where the agent takes a longer path to reach the goal but stays in the goal for longer.}}
\label{fig:Counter_example}
\vspace{-10pt}
\end{figure}


In this section, we outline our reward-shaping mechanism which is designed to promote efficient goal attainment with collision avoidance.
 The reward function provides a large reward $(+b)$ for reaching the goal as sufficient incentive for doing so, while penalizing each step \textit{before} reaching the goal with a minor negative reward $(-a)$ to hasten the process.
\begin{figure}
    \centering
    \includegraphics[width=0.95\linewidth]{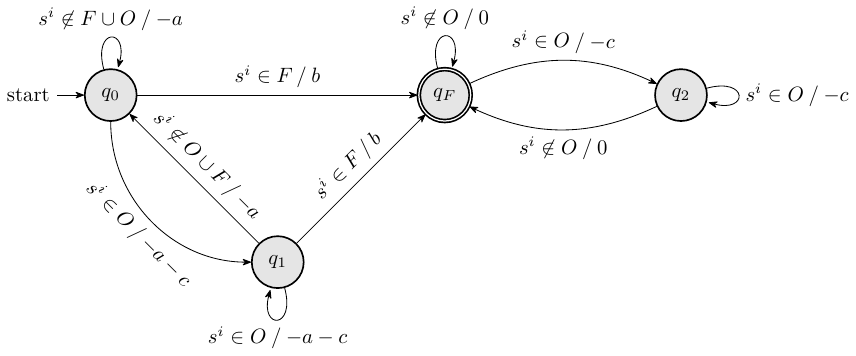} 
    \caption{\textit{A deterministic weighted automaton defining the reach-avoid task that needs to be completed by each agent in the system. In the figure, \(s^i\) refers to the current state of agent \(i\) in the multi-agent system.}}
    \label{fig:weighted-task-automaton}
    \vspace{-10pt}
\end{figure}
 It also generates a hefty penalty for collisions $(-c)$ at any point which could be smaller than or equal to the reward for reaching the goal, depending on whether or not the system wants to permit trajectories with collisions or not. This structured reward shaping aims to balance the urgency of completing tasks with the imperative of maintaining safe operations within the multi-agent system. 
We abuse notation and denote $\traj^i(t)$ to be the state of the agent $i$ at time $t$ in the trajectory. The reward function $R$ for an agent $i$ at time $t$ taking action $a$ resulting in a state transition from $s^i_t$ to $s^{i}_{t+1}$ can be defined as:
\begin{equation}
R^i_t(s^i_t, a, s^{i}_{t+1}) = f_F(\traj^i(t))  + f_O (\traj^i(t)) 
\end{equation}
Now we can define:
\begin{equation}
    \begin{array}{cc@{,\ }cc}
       f_O(\traj^i(t))  =&  \begin{cases}
-c &~\text{if}~ s^{i}_{t} \in \obstacles \\
0  &~\text{if}~ s^{i}_{t} \notin \obstacles
\end{cases} 
&
       f_F(\traj^i(t))  =& \begin{cases}
0 &~\text{if}~\exists t' < t: s^{i}_{t'} \in \goalset \\
b  &~\text{if}~ s^{i}_{t} \in \goalset ~\text{\&}~ \nexists t' < t: s^{i}_{t'} \in \goalset\\
-a &~\text{if}~ s^{i}_{t} \notin \goalset ~\text{\&}~ \nexists t' < t: s^{i}_{t'} \in \goalset
\end{cases}
    \end{array}
\end{equation}

where $a,b,c \in \mathbb{R}^+$; $s^{i}_{t} \in \goalset$, indicates agent $i$ reaching the goal and $s^{i}_{t} \in \obstacles$ indicates a collision with an obstacle. 
We also define the constraint: $b \ge c > a \cdot T$, where $T$ is the total time permitted to reach the goal (i.e., the length of an episode in terms of timesteps). This
provides a hierarchical framework where reaching the goal is paramount, followed by the penalty for collisions, with a smaller penalty for taking steps before reaching the goal.  

The above reward structure generally follows the weights from the automaton shown in
Figure~\ref{fig:weighted-task-automaton}.
Moreover, for a discount factor \(\gamma \in (0,1)\) we can interpret the trace over the $\mathsf{DiscountedSum}_\gamma$ valuation function.

Let us denote $toa(\traj)$ as the time of arrival of the trajectory into the goal, where $toa(\traj) = \min \{t \mid (s_t, a_t) \in \traj  \ \text{and} \  s_t \in \ F \}$.
\begin{proposition}
If the sum of rewards over a trajectory is positive, the trajectory satisfies the condition of reaching the goal within \(T\) timesteps, and not colliding with an obstacle.
    \(w_\Ac(\tau) > 0 \implies \Ev_{[0,T]} (s_{toa} \in \ F) \wedge \Alw_{[0,T]} (\neg s \in \obstacles)\)
    \footnote{\(\Ev_{[a,b]} (x > 0)\)
    denotes \(\exists t \in [a,b] ~\text{s.t.}~  x(t) > 0\) and \(\Alw_{[a,b]} (x > 0)\) denotes \(\forall t \in [a,b] ~\text{s.t.}~ x(t) > 0\).}
\end{proposition}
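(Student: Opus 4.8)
The plan is to prove the contrapositive-style implication directly by contradiction: assume $w_\Ac(\tau) > 0$ and show that both conjuncts of the conclusion must hold. Recall that since the automaton is deterministic, $\Run(\tau)$ is a singleton, so $w_\Ac(\tau) = \Val(\Weights(\tau, r))$ for the unique run $r$, and the valuation function of interest is $\mathsf{DiscountedSum}_\gamma$ with $\gamma \in (0,1)$ (the argument for $\mathsf{Sum}$ is the special case $\gamma = 1$). By the definition of $R^i_t = f_F + f_O$ and the automaton of Figure~\ref{fig:weighted-task-automaton}, the weight sequence along the run decomposes step-by-step into the obstacle contribution and the goal contribution.

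The first step is to handle collision avoidance. Suppose for contradiction that $s_t \in \obstacles$ for some $t \le T$. Then that step contributes weight at most $-c$ from $f_O$ (plus at most $b$ from $f_F$ on the single step where the goal is first reached, and $-a$ or $0$ on every other step from $f_F$). Using the constraint $b \ge c > a\cdot T$, I would bound the total discounted sum: the maximum possible positive contribution over all steps is $b$ (earned at most once), every pre-goal step contributes at least $-a$, there are at most $T$ such steps, and at least one step contributes an additional $-c$. So the sum is at most $b - c \le 0$ minus further non-positive terms — in fact strictly negative once the $-a$ penalties or the discounting are accounted for — contradicting $w_\Ac(\tau) > 0$. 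One has to be slightly careful with discount factors making $b$ worth less than its face value while making $-c$ also worth less; the cleanest route is to note $b$ is collected at some time $\ge 0$ so its discounted value is $\le b$, while if a collision happens the run also (by the automaton structure) transitions to a sink/rejecting region, which I would argue using the acceptance condition — but even ignoring acceptance, the crude bound $\sum \le b - c \cdot \gamma^{t_{\text{coll}}}$ combined with $b \ge c$ and the presence of $-a$ terms suffices for a contradiction. This is the step I expect to require the most care: making the inequality strict and correctly interpreting how a collision interacts with the automaton's final/non-final locations and the $\max$ over runs.

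The second step is expeditious goal achievement. Suppose for contradiction that the goal is never reached within $[0,T]$, i.e., $s_t \notin \goalset$ for all $t \le T$ (so $toa(\tau) > T$ or is undefined). Then $f_F$ contributes exactly $-a$ on every one of the (at most) $T+1$ steps and $b$ is never collected, so the goal part of the sum is $\le -a$ (and with discounting still negative), while $f_O \le 0$ everywhere; hence $w_\Ac(\tau) \le -a < 0$, contradicting positivity. Combining the two steps: positivity of the discounted sum forces no obstacle to ever be visited and the goal to be visited at some $t \le T$, which is exactly $\Ev_{[0,T]}(s_{toa} \in F) \wedge \Alw_{[0,T]}(\neg s \in \obstacles)$. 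Finally I would remark that the converse fails (a satisfying-but-slow trajectory can still have negative sum because of accumulated $-a$ penalties), which is precisely the point of the reward design, so only the stated one-directional implication holds.
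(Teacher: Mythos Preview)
Your second step (goal never reached $\Rightarrow$ total weight $\le -a < 0$) is clean and correct, and is essentially what the paper relies on implicitly. The paper itself gives no real proof here: the single sentence following the proposition (``All timesteps after $toa(\tau)$ are weighted $0$\ldots as we assume the absence of collisions\ldots'') is a remark setting up the subsequent lemma, not an argument for the proposition. So your write-up is already more detailed than the paper's.

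However, your collision step has a genuine error. You write that the sum is at most $b - c \le 0$, but the constraint in the paper is $b \ge c > a\cdot T$, so $b - c \ge 0$, not $\le 0$. This is not just a typo: it breaks the argument. Consider the scenario where the agent reaches the goal at $t=0$ (so $f_F$ contributes $+b$ and there are no $-a$ penalties at all) and then collides with an obstacle at some later time $t_{\mathrm{coll}} \ge 1$. The discounted sum is then $b - \gamma^{t_{\mathrm{coll}}} c$, and since $\gamma \in (0,1)$ and $b \ge c$, this is strictly positive---yet a collision has occurred. Your fallback (``the presence of $-a$ terms suffices'') does not save you here because there are none, and discounting works \emph{against} you by shrinking the $-c$ penalty. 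You correctly flag that the automaton's sink/acceptance structure is the likely escape hatch, but you do not actually use it; without that structural argument (e.g., that a collision forces a non-accepting run, or that the sink accumulates $-c$ indefinitely), the bound you wrote down does not yield the contradiction you claim.
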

All timesteps after $toa(\traj)$ are weighted $0$ according to our reward function, as we assume the absence of collisions, so we only consider the part upto $toa(\traj)$.
\begin{lemma}
    Trajectories that are more expeditious i.e. that reach the goal sooner (assuming they do not have collisions) have higher rewards returned by the weighted automata.  
     \[toa(\traj) < toa(\traj') \Rightarrow w_A(\traj) >  w_A(\traj') \] 
\end{lemma}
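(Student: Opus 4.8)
The plan is to use that the weighted automaton of Figure~\ref{fig:weighted-task-automaton} is deterministic: every trajectory $\traj$ induces a unique run, hence a unique sequence of transition weights, and $w_A(\traj)$ is simply the value the chosen valuation function assigns to that sequence, with no $\max$ over runs to handle. First I would pin down the shape of this weight sequence for an arbitrary collision-free trajectory $\traj$ with $toa(\traj) = k \le T$. Reading off the reward function: at step $t < k$ the agent is not yet at the goal and has never been there, so the emitted weight is $-a$; at step $t = k$ it reaches $\goalset$ for the first time, so the weight is $+b$; at every step $t > k$ the goal has already been visited and no obstacle is entered, so the weight is $0$. This is precisely the structural point behind the counter-example of Figure~\ref{fig:Counter_example}: once the goal is reached, the suffix of the trajectory contributes nothing, so only the prefix up to $toa(\traj)$ determines the reward.

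Next I would evaluate the valuation function on this weight sequence. Under $\mathsf{Sum}$ one gets $w_A(\traj) = b - a\,k$, strictly decreasing in $k$, so $toa(\traj) < toa(\traj')$ gives $w_A(\traj) > w_A(\traj')$ immediately. For the case of real interest, $\mathsf{DiscountedSum}_\gamma$ with $\gamma \in (0,1)$, write $W(k) = b\gamma^{k} - a\sum_{t=0}^{k-1}\gamma^{t}$ for the reward of a collision-free trajectory arriving at time $k$. A one-line computation gives
\[
W(k) - W(k+1) = b\gamma^{k} + a\gamma^{k} - b\gamma^{k+1} = \gamma^{k}\bigl(a + b(1-\gamma)\bigr) > 0,
\]
since $a,b > 0$ and $\gamma < 1$. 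Hence, setting $k = toa(\traj)$ and $k' = toa(\traj')$ with $k < k'$, telescoping gives $W(k) - W(k') = \sum_{j=k}^{k'-1}\gamma^{j}\bigl(a + b(1-\gamma)\bigr) > 0$, which is the claim. (The hierarchy constraint $b \ge c > a\,T$ is not needed for monotonicity in $toa$; that constraint only ensures the sign condition used in the preceding proposition, whereas the inequality here holds for any $a,b > 0$ and $\gamma \in (0,1)$.)

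I do not anticipate a genuine obstacle. The only things to watch are (i) the case analysis for which branch of $f_F$ fires at each timestep — in particular that step $k$ emits $+b$ rather than $-a$, and that every later step emits exactly $0$, which is what removes any dependence on the agent's behavior after arrival — and (ii) the elementary geometric-series manipulation in the discounted case. A slightly slicker alternative for the discounted case: turning a trajectory with $toa = k$ into one with $toa = k+1$ amounts to inserting an extra $-a$ at position $k$ and shifting the $+b$ from position $k$ to $k+1$, both of which strictly lower the discounted sum; I would use whichever phrasing is shorter in the final write-up.
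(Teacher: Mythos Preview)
Your proposal is correct and follows the same approach as the paper, which simply states that the result follows from the definitions of the reward function and $w_A(\traj)$. You have filled in the details the paper omits---the explicit shape of the weight sequence and the monotonicity computation for both $\mathsf{Sum}$ and $\mathsf{DiscountedSum}_\gamma$---so your write-up is strictly more informative than the original.
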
 

\begin{proof}
    The proof follows from the definitions of the reward function and $w_A(\traj)$.
\end{proof}

\section{Policy Optimization}

In our multi-agent system, we employ homogeneous agents that are centrally trained under a unified framework, allowing for the sharing of a joint policy. This homogeneity simplifies the training process and enables the application of single-agent methodologies, such as $A^*$ and single-agent reinforcement learning algorithms to manage the collective behavior. 

\paragraph{Search-based methods for policy learning:} Heuristic-based approaches to pathfinding, exemplified by algorithms such as $A^*$ \cite{astar}, D-star \cite{dstar}, and D-star Lite \cite{dstarlite}, utilize heuristics to navigate efficiently from one point to another within an environment. These algorithms operate by employing a heuristic to estimate the cost from any node in the search space to the goal, effectively guiding the search process toward the most promising paths while minimizing unnecessary exploration. Common heuristics include the Manhattan distance, which provides a direct estimation of the minimal possible distance to the goal, assuming a grid-like path with no obstructions. This heuristic foreknowledge is crucial as it significantly influences the efficiency and effectiveness of the search. 

\paragraph{Learning-based methods:} Data-driven policy optimization methods are
increasingly popular in environments where the system is stochastic, the model
of the system is not available, and when the state-space is large. Model-free reinforcement learning methods such as Q-learning \cite{watkins}, Deep Q-Networks (DQN) \cite{dqn}, Advantage Actor Critic (A2C) \cite{a2c}, and Proximal Policy Optimization (PPO) \cite{ppo} refine their strategies through extensive interaction with the environment. 

Q-learning is a foundational off-policy algorithm where each pair of state $s$ and action $a$ is associated with a $Q(s,a)$ value representing the expected future reward. This value is iteratively updated using the Bellman equation \cite{watkins}. In multi-agent systems, Q-learning can be adapted by extending the state-action space to include all possible combinations of states and actions for every agent, effectively using a shared Q-table. This approach allows all agents to follow a centrally managed, uniform policy that operates based on the joint action space. Deep Q-Networks (DQN) extends Q-learning by employing deep neural networks to approximate the $Q(s,a)$ function, allowing it to handle continuous and high-dimensional state spaces efficiently. 

A2C is an actor-critic method that uses multiple parallel environments to update its policy and value networks reducing variance and improving learning speed. PPO is a more stable and efficient actor-critic method that optimizes a clipped version of the objective function instead of directly optimizing the objective function to provide smoother updates. Actor-critic methods can be implemented with each agent operating simultaneously in a shared environment, all contributing to a central policy update mechanism. This ensures that learning is synchronized.

Monte Carlo methods, another branch of model-free RL, do not assume knowledge of the environment's dynamics and instead rely on sampling full trajectories to estimate expected returns. These methods can be adapted for multi-agent use by sampling and averaging returns across all agents, using these aggregated insights to update a central policy that guides all agents.

We explored a spectrum of strategies for addressing the multi-agent pathfinding challenge. These included the traditional heuristic-driven approach such as A*, tabular temporal difference methods like Q-learning, neural network-driven techniques exemplified by Proximal Policy Optimization (PPO), and stochastic sampling-based approaches such as those used in Monte Carlo methods.

\paragraph{Limitations of these approaches:}
Limitations of search-based algorithms include scalability as the frontier grows exponentially with an increase in the number of agents, lack of applicability in stochastic and dynamic environments, and a priori knowledge of the environment in terms of an admissible heuristic which we {\em do not provide} to any of the other algorithms including our own. 
The primary shortcoming of learning-based algorithms such as the ones we described above that we aim to overcome is sample-inefficiency. These algorithms can require a large number of interactions with the environment to learn effective policies, which is impractical in complex or time-sensitive applications. These algorithms also treat each step or trajectory as equally important for a policy update. However, we aim to replace that with a weighted update that ensures that steps in trajectories that overperform or underperform by a significant magnitude, also make a more significant impact to the policy.

\subsection{Algorithmic framework using Evolutionary Game Theory} 
{\em Evolutionary Game Theory} (EGT) presents a dynamic alternative to
classical game theory and is better suited to the realities of multi-agent
systems.  \cite{egt_stanford,egt_games,egt_sandholm} Unlike traditional game
theory, it does not presume rationality among agents, making it adaptable to any
scenario. It focuses on the dynamics of strategy changes driven by the success
of current strategies in the population, reflecting a more naturalistic approach
to agent learning. Evolutionary strategies also offer quicker and more robust
convergence guarantees, making them particularly appealing in environments where
reinforcement learning is an intuitive fit. Given these advantages, it is a
promising direction for advancing research and practical applications in
multi-agent settings.

\paragraph{Replicator equation:}
The key concept within EGT pertinent to us is that of {\em replicator dynamics}, which describes how the frequency of strategies (or policies in RL) changes over time based on their relative performance.
The classic replicator equation in evolutionary game theory describes how the proportion of a population adopting a certain strategy evolves over time. Mathematically, it is expressed as \cite{replicator}:
\begin{equation}
    x_j(i+1) = x_j(i) \cdot \frac{f_j(i)}{\bar{f}(i)} \label{eq:og_rep}
\end{equation}
where $x_j(i)$ represents the proportion of the population using strategy $j$ at time $i$, $f_j(i)$ is the fitness of strategy $j$, and $\bar{f}(i)$ is the average fitness of all strategies at time $i$. The equation indicates that the growth rate of a strategy's proportion is proportional to how much its fitness exceeds the average fitness, leading to an increase in the frequency of strategies that perform better than average.

\paragraph{Representation of populations and the fitness  equivalent:} We represent the probability distribution over actions in a given state as a population. Each individual in the population corresponds to an action, and the proportion of (individuals corresponding to a specific action) in the  population represents the probability of taking the action in that state. Thus, the population represents a stochastic policy.
The fitness function measures the reproductive success of strategies based on payoffs from interactions, similar to utility in classical game theory, or how in RL, the expected return measures the long-term benefits of actions based on received rewards. Both serve as optimization criteria: strategies or policies are chosen to maximize these cumulative success measures to guide them towards optimal behavior.
Therefore, in our model, the fitness for a state \( f(s) \) corresponds to the expected return from that state, equivalent to the value function \( v(s) \), and the fitness for a state-action pair \( f(s,a) \) corresponds to the expected return from taking action \( a \) in state \( s \), equivalent to the action-value function \( q(s,a) \). 

Under the assumption of sparse rewards— where significant rewards are received only upon reaching specific states or goals, \( f(s) \) is defined as  \( \EX[f(\tau_s)] \), the expected return across all trajectories through state \( s \). Likewise, \( f(s,a) \) is defined as \( \EX[f(\tau_{(s,a)})] \), the expected return across trajectories involving the state-action pair \( (s,a) \). 

\paragraph{Policy update mechanism:} The replicator equation can be adapted to update the probability of selecting certain actions based on their relative performance compared to the average. The adaptation of the replicator equation is as follows:
\begin{equation}
\label{eq:policyupdate}
    \pi^{i+1} (s,a) = \frac{\pi^i (s,a) f(s,a)}{\sum_{a' \in A} \pi^i (s,a') f(s,a')} 
\end{equation}
where $\policy^{i}(\agentstate, \action) $ is the  probability of action $\action$ in state $s$, in the $i^{th}$ iteration, and $\pi^{i+1}$ represents the policy in the ${i+1}^{th}$ iteration. 

\paragraph{\revision{Heterogeneous agents:}}
\revision{We note that in our approach agents are assumed to be homogeneous, i.e., the 
learned policy is shared across all the agents. Our approach can theoretically work
for heterogeneous agents as well. However, each agent is then
required to maintain its own policy. This means that 
for each agent, per agent state, a population of actions will
require to be updated.}

\paragraph{Prioritizing expeditious and safe trajectories:}
The adaptation of our policy update equation focuses on the ratio of the expected fitness of a particular action to the average expected fitness, allowing for a more nuanced update. It scales the probability of each action relative to how much better (or worse) it performs compared to the average, rather than simply whether it is better or worse. 
Actions that lead to higher returns relative to the average are thus promoted in subsequent iterations of the policy. This selective pressure inherently favors actions that contribute to reaching goals faster and avoiding collisions, as they have a greater impact on the expected return due to the structured rewards. As a result, the policy evolves towards a strategy that seeks to maximize rewards by combining efficiency with safety. The learning rate \( \alpha \) ensures that the update remains bounded and allows for fine-tuning of the learning process.
\renewcommand{\tcp}[1]{\textcolor{blue}{// #1}}

 \begin{figure}[t]
    \centering
    \includegraphics[width=1\linewidth, trim=2.55cm 1.5cm 5.25cm 0cm, clip]{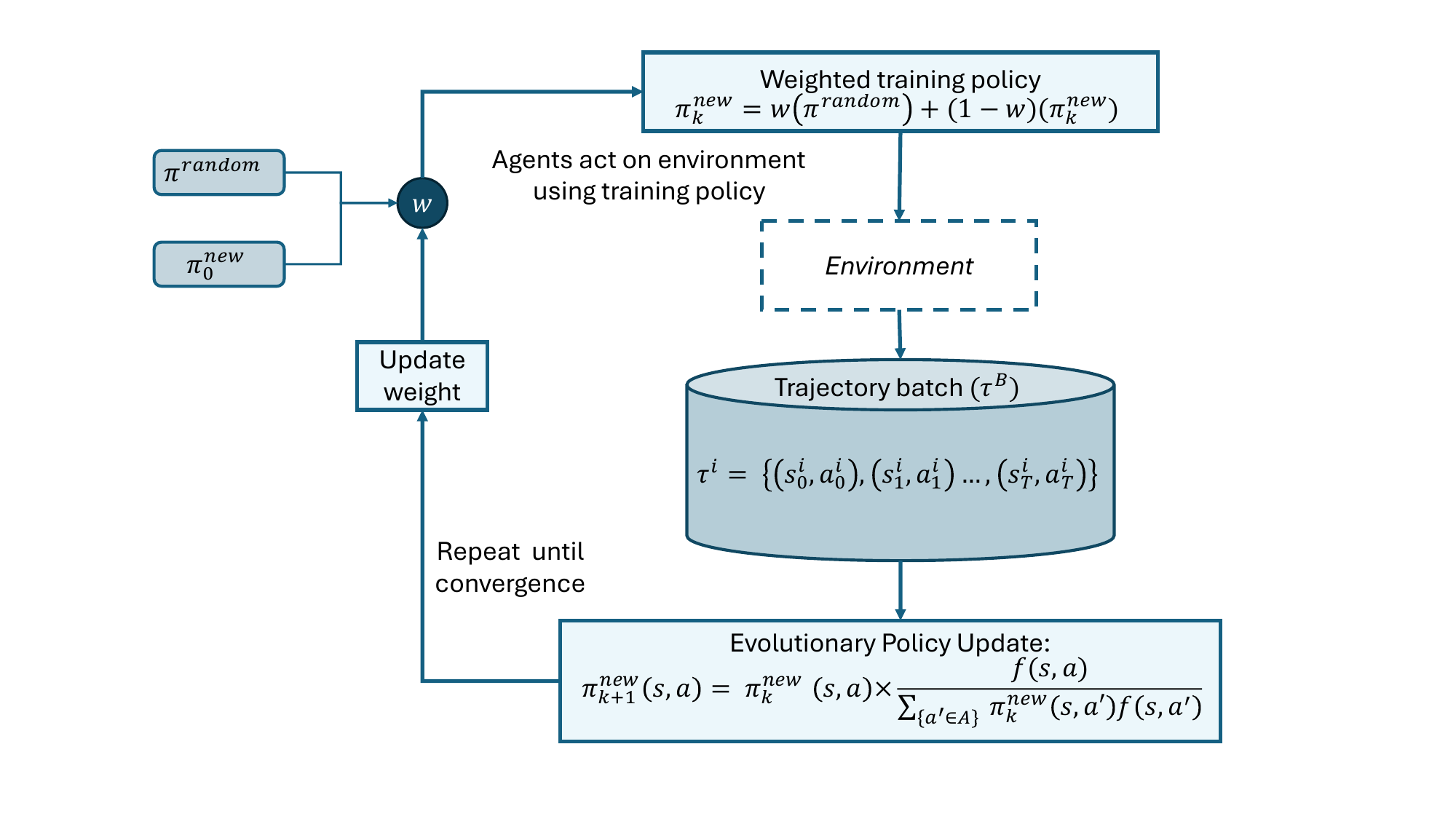}
    \caption{ \revision{\textit{Overview of the evolutionary based learning approach used in our algorithms.}}}
    \label{fig:policy_population}
    \vspace{-10pt}
\end{figure}

\DontPrintSemicolon
\begin{algorithm}[t]
\caption{Multi-agent Pathfinding using EGT}\label{alg:optimized_policy} 
\SetKwInput{Input}{Input}
\SetKwInput{Output}{Output}
\Input{Hyper-parameters
\begin{itemize}[label=\textbullet]
    \item To alter weight: $\nu \in (0,1)$
    \item To ensure discounted policies: $\epsilon > 0$
    \item To check convergence: $\delta > 0$
    \item Scaling factor/learning rate: $\alpha > 0$
\end{itemize}}
\Output{Trained policy}
$k \gets 1$  \tcp{$k$ is the iteration number} \;
$\eta_0 \gets 0$  \tcp{$\eta_k$ contains the expected at the end of the $k^{th}$ iteration} \;
$\pi^{new}_0 (s,a)$ $\leftarrow$ $\frac{1}{|A|}$, $\forall s\in S, a \in A$ \tcp{Initialize policies as random} \;

\While{True}{
     \For{each episode in batch $b = 1$ to $B$}{ \nllabel{algoline:loopstart}
         Generate trajectory $\traj^i_b$ for each agent using $\pi^{new}_k$ \;
          Append the collective generated trajectory $\traj_b$ to batch \;   
     }
      Compute expected return $ \eta_k $ over the batch \;
     \If {$\eta_k - \eta_{k-1} \geq \delta$ }{  \nllabel{algoline:convergence}
         \For{all trajectories $\traj_b$ in batch} { 
             \For{each  state-action pair $(s,a) \text{ in trajectory } \traj_b$}{
             \textbf{Evolutionary Policy Update:} \;
                 ${\policy_{k+1}^{new}(s,a)}$ $\gets$ $\policy_{k}^{new}(s,a)  \cdot \alpha \cdot \frac{\pi^{new}_k (s,a) f(s,a)}{\sum_{a' \in A} \pi^{new}_k (s,a') f(s,a')}  $ \nllabel{algoline:loopend}
             }
         }
     $w \gets max ( \epsilon, w - \nu \ )$ \tcp{Decrement weight on exploratory policy} 
    $\pi^{new}_{k+1} \gets w \cdot \frac{1}{|A|} \ +(1- w)\cdot \pi_{k+1}^{new}$ \tcp{Weighted policy update}
    $k \gets k + 1$}
    \lElse{\Return $\pi^{new}_{k} $  \tcp{Return the final optimized policy}
          \nllabel{algoline:Return}}
}
\end{algorithm}

\paragraph{Proposed algorithm:}
Our algorithm uses batch-based updates: We initialize the policy $\pi^{new}$ to
be initially random. We sample trajectories as part of a batch, and the
state-action pairs in these trajectories are updated according to the update
rule.  The return for the $(k+1)^{th}$ iteration is set as the return for the
current batch of trajectories. We maintain a weighted discounted policy while
training, with the weight decreasing with each iteration. This process is
repeated until our termination condition has been met i.e. $\eta_{k+1} - \eta_k
> \delta$: This condition checks if our policy is improving with each update.
If it does not, we say it has converged.

\paragraph{\revision{Termination and convergence:}} \revision{We can show that the
algorithm terminates after a finite number of iterations, based on the
observation that policy updates ensure that the utility/value of each state
monotonically increases. The evolutionary update guarantees that the probability
$\pi^{new}(s,a)$ increases only when $f(s,a)$ is greater than $\EX_a(f(s,a))$,
while $\pi^{new}(s,a)$ decreases whenever $f(s,a) < \EX_a(f(s,a))$. This in turn
guarantees that under $\pi^{new}$, the value of the state $s$ increases. For
states not sampled in batch, $\pi^{new}(s,a)$ is unchanged. Now there are two
cases: (1) The maximum improvement across all states between two consecutive
iterations is less than $\delta$, in which case the algorithm goes to
Line~\ref{algoline:Return} and returns the policy $\pi^{new}$. (2) The maximum
improvement is greater than $\delta$. In the second case, the algorithm cannot
forever increase the expected return $\eta_k$, as the maximum expected return is 
guaranteed to be finitely upper bounded (as trajectories are finite in length).}

\revision{We note that our algorithm may converge to a policy that is sub-optimal; however,
as long as the return $\eta_k$ is positive, the trained policy satisfies the
weighted automaton objectives. To prove that the algorithm terminates to an
optimal policy would require the policy update operator to be a contraction
mapping, similar to learning algorithms like Q-learning \cite{watkins} or
value iteration \cite{Bert05}; we defer this extension to future work.}

\revision{To analyze the complexity of our algorithm, we note that each loop (i.e. from Lines ~\ref{algoline:loopstart} to ~\ref{algoline:loopend}) runs until $\eta_{k} - \eta_{k-1} < \delta$. If we assume the existence of an optimal policy that maximizes returns, and let the return under the optimal policy be denoted as $\eta^*$, then this loop can run for a maximum of $\frac{\eta^* - \eta_0}{\delta}$ times. Each loop itself has a complexity of $T \cdot B$ (with $T$ being the length of an episode and $B$ being the number of episodes in a batch). Then, the  complexity of our algorithm is $ O \left( \frac{\eta^* - \eta_0}{\delta} \cdot T \cdot B\right)$.}




\section{Experiments and Results}
For our experiments, we define the world as a two dimensional $n \times n $ grid, where each state $s \in S$, is defined as $(x,y)$ where $x$ and $y$ are coordinates. 
The agents are homogeneous and the set of actions available to each agent $A^i$ = \{up, down, left, \ right, \ stay\}.  At the beginning of each episode the agent is assigned a random start location and must reach one of the goal locations by the end of it.
We vary the grid size from $20\times20$ to $200\times200$, and vary the number of agents from 2 to 50.
We benchmark our algorithm against a planning algorithm $A^*$ \cite{astar}, a Monte-Carlo approach \cite{sutton}, tabular Q-learning \cite{sutton} and a Deep RL algorithm PPO \cite{ppo}. We use Manhattan-distance as the heuristic for $A^*$ and the same reward function as our algorithm for the other approaches.
Our experiments were carried out on a laptop with 2.0GHz dual-core Intel Core i5 processor and 16GB RAM.
We use the `stablebaselines3'  \cite{raffin2021stable} implementation of PPO with standard values for the hyperparameters. Our algorithms have been implemented on OpenAI gym \cite{openai}.

\begin{figure}[ht!]
\centering
\subfloat[Time to reach the goal vs. Grid Size]{%
\includegraphics[width=0.49\linewidth]{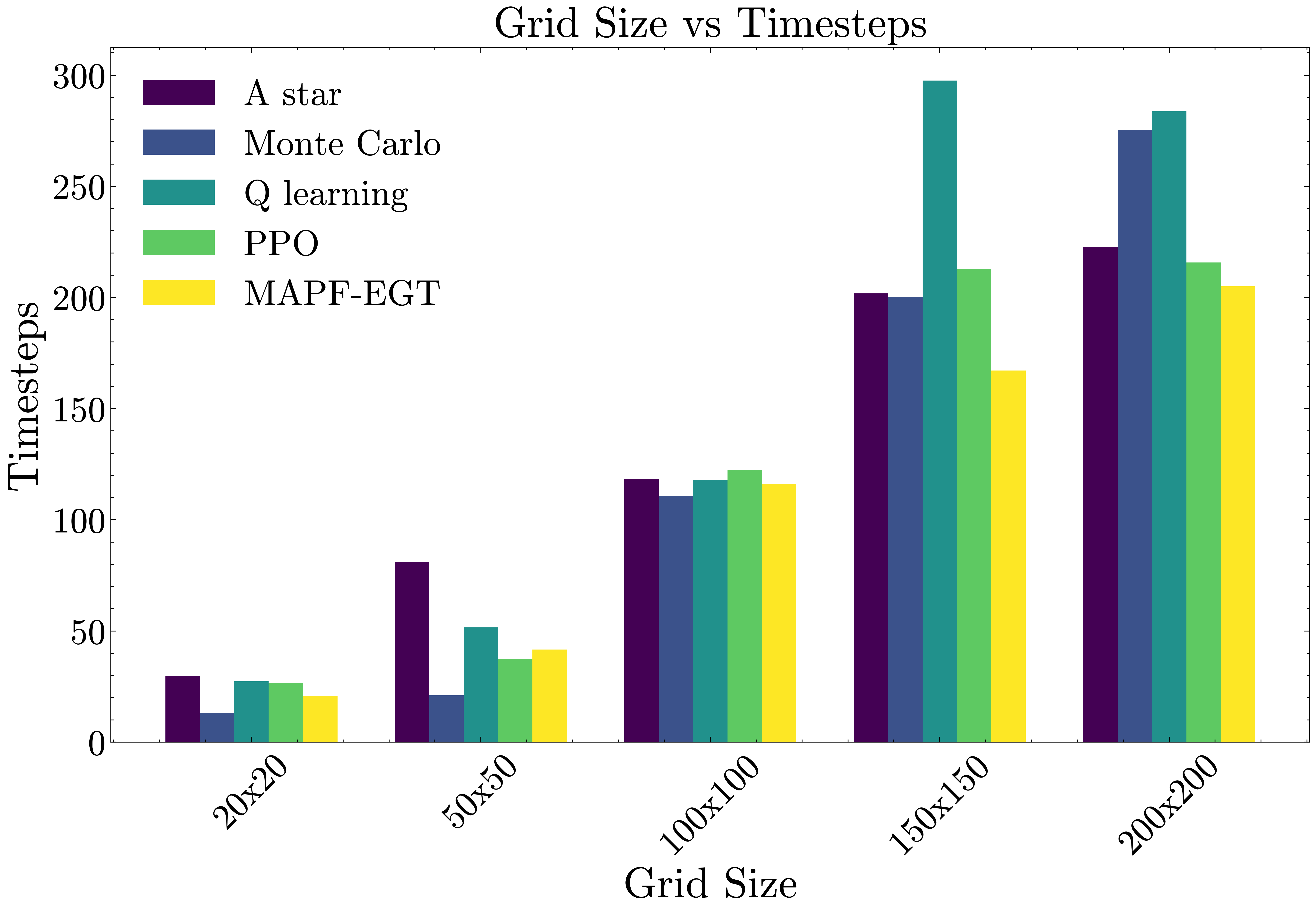}
        \label{fig:grid-timesteps}
}
\subfloat[Obstacle distance vs. Grid Size]{%
\includegraphics[width=0.49\linewidth]{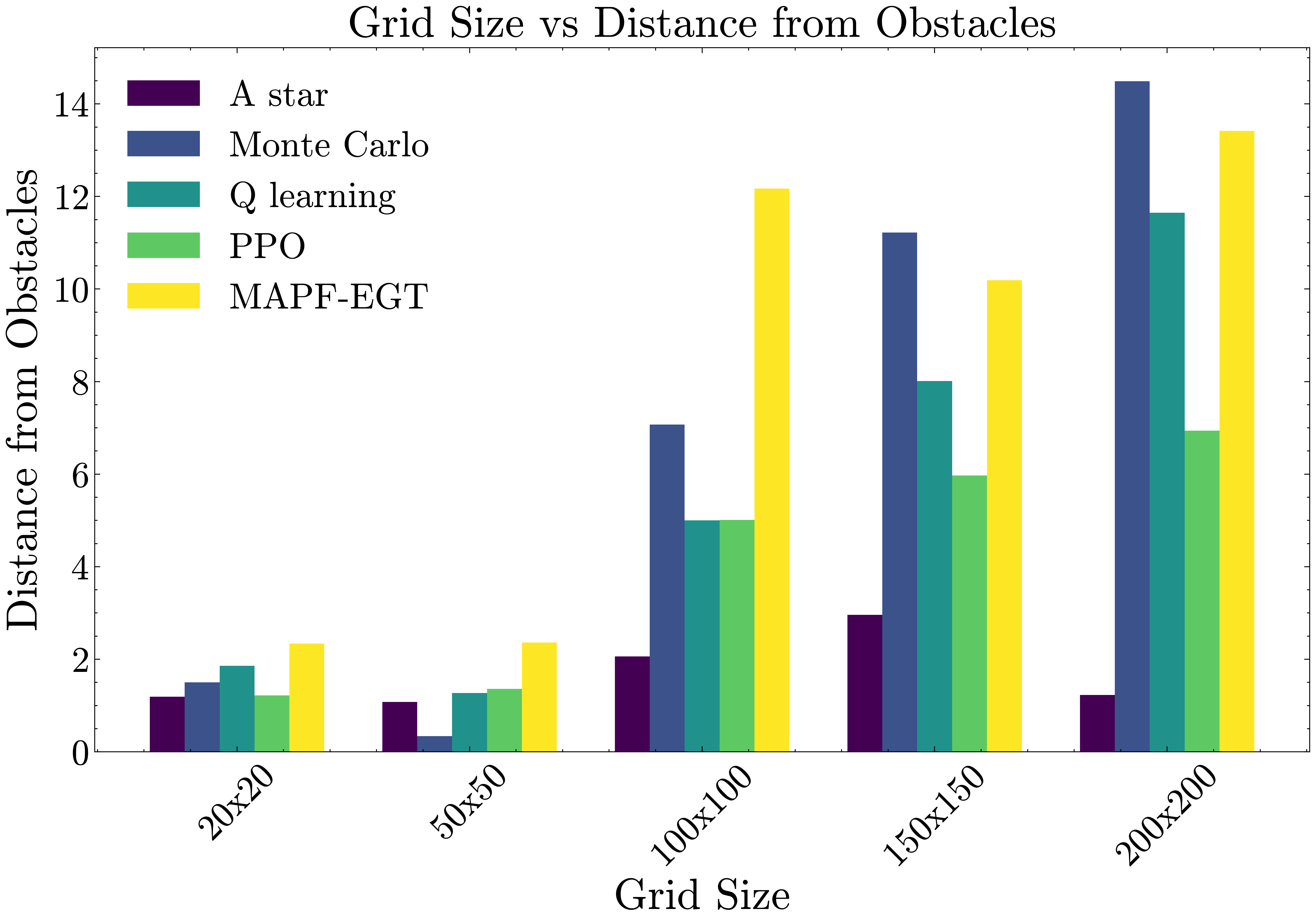} 
\label{fig:grid-dis}
} 
\\
\subfloat[Clock Time (s) vs Grid Size]{%
\includegraphics[width=0.49\linewidth]{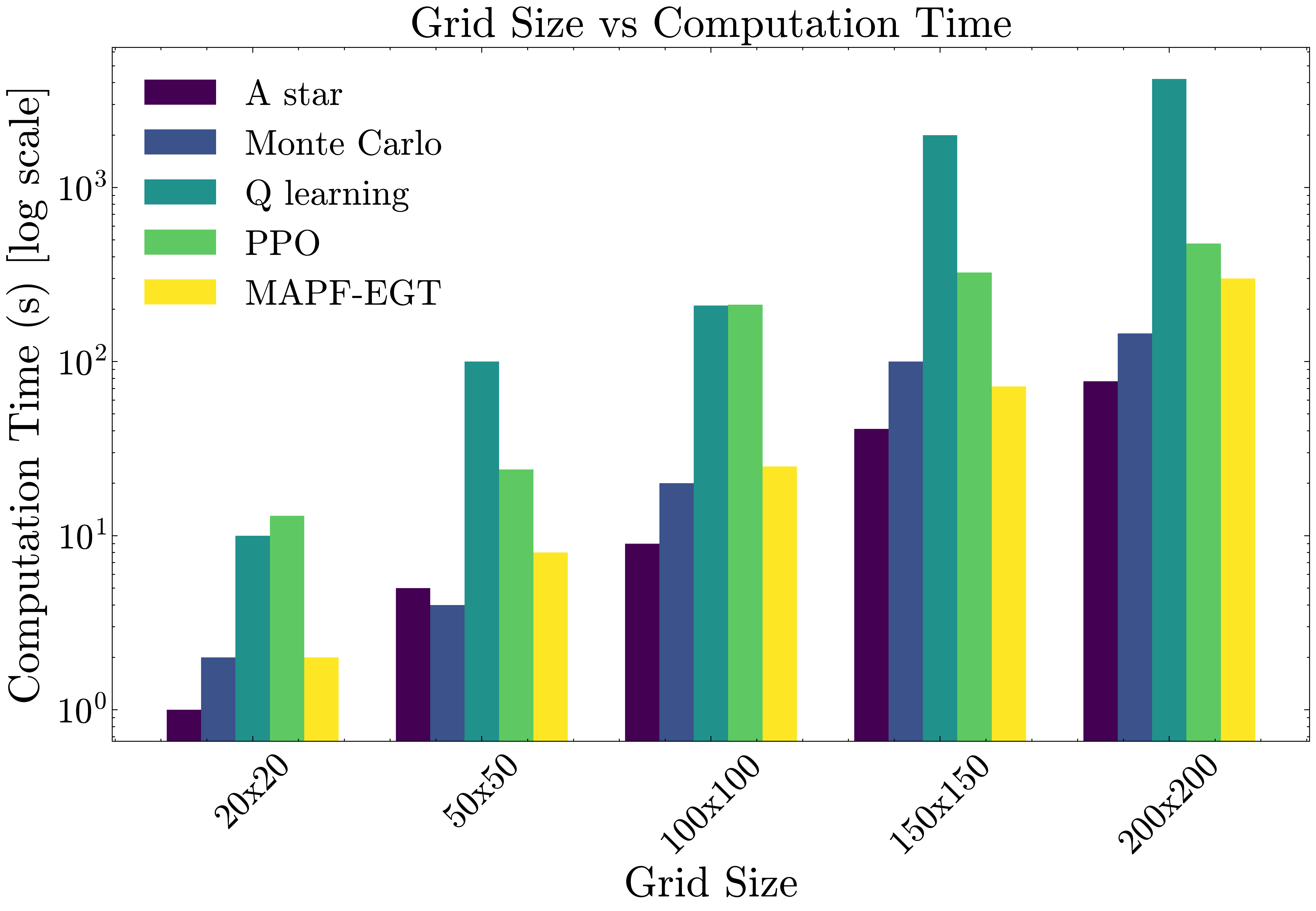}
\label{fig:grid-comp}
}
\subfloat[Total Time vs. Number of Agents]{%
\includegraphics[width=0.49\linewidth]{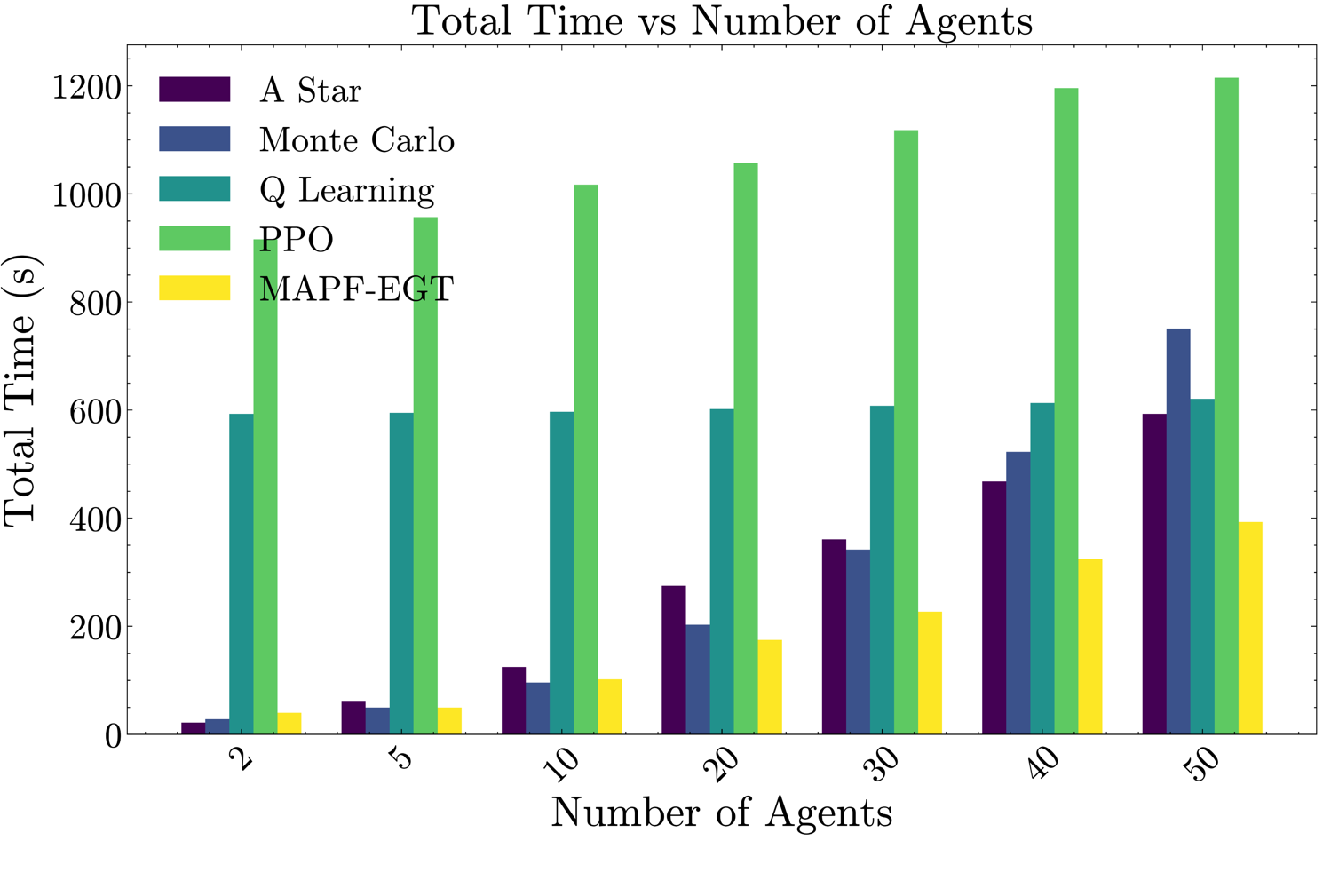}
        \label{fig:total-time-agents}
}
\caption{\textit{MAPF-EGT benchmarked against the algorithms: $A^*$, Monte-Carlo search, PPO, and Q learning. Timesteps to reach the goaal (Fig. a), Expected minimum distance from obstacles (greater distance indicates safer paths)(Fig. b), and clock time (seconds) required for computation (Fig. c) compared across the grid sizes: 20$\times$20, 50$\times$50, 100$\times$100, 150$\times$150 and 200$\times$200. Fig.d shows scaling in total time taken, with number of agents varied from 2 to 50 on a 100 $\times$100 grid. We also note that the $A^*$ algorithm is given a heuristic i.e. the Manhattan distance to the closest goal, information that the other algorithms are not given.}}
\label{fig:metrics}
\vspace{-10pt}
\end{figure}

\subsection{Results and Discussion}
\paragraph{\revision{Comparison with $A^*$:}} 
\revision{We include $A^*$ as a baseline heuristic search
algorithm. $A^*$ uses a heuristic based on Manhattan distance and
does not account for proximity to obstacles, while the other algorithms
incorporate this information into their reward structures. We note that each agent invokes $A^*$ separately to plan its path, while the other algorithms learn a common policy. The paths identified
by $A^*$ are close to optimal as the Manhattan distance
is an admissible heuristic, and thus a good baseline for comparison.}

\paragraph{\revision{Timesteps/Path length:}}
\revision{MAPF-EGT excels in path length (i.e. timesteps to reach the goal) for large grids. PPO and Monte Carlo rival it on smaller grids but lag behind on larger ones.}

\paragraph{\revision{Distance from obstacles:}} 
\revision{ Monte Carlo keeps a relatively larger distance from obstacles, showing a conservative approach. Q-Learning and PPO have similar distances, indicating a balanced approach in avoiding obstacles. MAPF-EGT shows a more adaptive strategy, maintaining sizeable distances from obstacles even as grid size increases.}

\paragraph{\revision{Computation time:}} 
\revision{Q learning struggles with
scalability and training, affecting its large-grid success rate. PPO takes longer time on smaller grids but scales well with increase in size. MAPF-EGT has a comparable computation time to the Monte-Carlo method.}

\paragraph{\revision{Number of agents:}} 
\revision{Testing MAPF-EGT with 2 to 100 agents on a 100x100 grid, we see that PPO scales rather poorly, while Q-learning remains steady. Although Q-learning had long training times as indicated previously, it's short run times make it scale well with increase in the number of agents. MAPF-EGT scales the best with increase in the number of agents. Our intuition for this is that more agents represent more trajectories generated, and more diverse and richer data to learn from.
}

\section{Related Work}
Early work in modeling environment dynamics used Markov decision processes (MDPs) \cite{SadighKapoor2016,HaesaertEtAl2018} or differential equations \cite{GilpinEtAl2020,PantEtAl2018}. Recently, the focus has shifted to data-driven and automata-based methods for control synthesis and achieving temporal logic-based objectives \cite{balakrishnan2023modelfree}. Modern techniques incorporate robustness metrics into deep reinforcement learning (RL) frameworks, replacing traditional reward functions with back-propagation to train controllers for complex temporal tasks \cite{balakrishnan2019structured}. In RL, reward engineering is crucial, with methods ranging from Q-learning for robust controller learning \cite{AksarayEtAl2016}, to using deterministic finite automata (DFA) for task specifications and potential functions. Automata are preferred for their robustness in symbolic weighted automata frameworks \cite{algebraicstl} and the ability to translate various specification formalisms into automata \cite{JothimuruganEtAl2021}. Addressing Non-Markovian Rewards (NMRs) now involves Non-Markovian Reward Decision Processes (NMRDPs) with history-sensitive reward functions \cite{CamachoEtAl2018}. Model-based MDP synthesis under Linear Temporal Logic (LTL) has been streamlined to mixed integer linear programming, using task progression concepts to develop policies \cite{KalagarlaJainNuzzo2020}. Advanced techniques for LTL controller synthesis handle unsatisfiable tasks \cite{GuoZavlanos2018}, and Temporal Logic Policy Search introduces a robustness-oriented approach to model-free RL \cite{LiMaBelta2018}.
\section{Conclusion}

\revision{In this paper, we have addressed the complex challenge of trajectory-level task objectives for autonomous multi-agent systems in uncertain environments. Our contributions include the introduction of weighted automata-based task objectives, which enhance the agents’ ability to complete tasks more expeditiously, and the development of the MAPF-EGT algorithm, which leverages evolutionary game theory to train homogeneous agent teams more effectively. Our empirical results demonstrate that our approach outperforms state-of-the-art reinforcement learning and heuristic search methods, achieving a reduction in path length and faster computation times. These findings highlight the potential of our methods to improve the efficiency and scalability of multi-agent systems in various practical applications.}

\paragraph{Limitations and future work:}
Our current setup is confined to discrete state-action spaces, but we are actively developing an extension to accommodate continuous spaces. This expansion will utilize function approximation through radial basis functions, enabling policy updates for states within proximity to the updated state. Furthermore, given our methodology of normalizing the probability distribution across a state's actions, we plan to implement a continuous model, which will be adjusted using probability density functions and updated via Dirac delta functions.
Our research currently focuses on homogeneous agents and tasks. Future developments will aim to include diverse forms of multi-agent learning, providing convergence guarantees. Additionally, we intend to explore game-theoretic guarantees within the realm of multi-agent learning to ensure robust and strategic interactions among agents.
\newpage
\bibliographystyle{splncs04}
\bibliography{references}
\end{document}